\newcommand{\set}[1]{\{ #1 \}}               
\newcommand{\setd}[2]{\{ #1 \mid #2 \}}      
\newcommand{\tuple}[1]{\langle #1 \rangle}   
\newcommand{\union}{\cup}
\newcommand{\emptyword}{\varepsilon}
\newcommand{\emptylang}{\emptyset}
\newcommand{\myrule}[3][]{\ifstrempty{#1}{\dfrac{#2}{#3}}{\dfrac{#2}{#3}[#1]}} 
\newcommand{\concurrent}{\not \gtrless}
\newcommand{\lang}[1]{L(#1)}
\newcommand{\shuff}[2]{\shuffle_{#1}(#2)}
\newcommand{\shuffx}[3]{\shuffle^{#1}_{#2}(#3)}
\newcommand{\causes}{\leq}
\tikzstyle{event} = [
\newcommand{\pomsetbox}[2][dummy]{%
  \node[draw, inner ysep = 5pt, inner xsep = 3pt, rounded corners,
        fit = #2] (#1) {};%
}
\def\techreport{}
\begin{document}
\ifdefined\techreport
\title{Shuffling posets on trajectories \\ (technical report)}
\else
\title{Shuffling posets on trajectories}
\fi
%
%
\author{Luc Edixhoven\orcidID{0000-0002-6011-9535}}
%
%
\institute{
Open University of the Netherlands
\and
Centrum Wiskunde \& Informatica (CWI)
\\
\email{luc.edixhoven@ou.nl}
}
\maketitle              
\begin{abstract}
Choreographies describe possible sequences of interactions among a set of agents.
We aim to join two lines of research on choreographies: the use of the shuffle on trajectories operator to design more expressive choreographic languages, and the use of models featuring partial orders, to compactly represent concurrency between agents.
Specifically, in this paper, we explore the application of the shuffle on trajectories operator to individual posets, and we give a characterisation of shuffles of posets which again yield an individual poset.

\keywords{Posets \and Shuffle on trajectories \and Concurrency}
\end{abstract}

\bigskip

\ifdefined\techreport
\noindent Technical report of a paper to be published in the proceedings of iFM 2023.
\else
\fi

\section{Introduction}
\label{sec:intro}

Distributed systems are becoming ever more important. However, designing and implementing them is difficult. The complexity resulting from concurrency and dependencies among agents makes the process error-prone and debugging non-trivial. As a consequence, much research has been dedicated to analysing communication patterns, or protocols, among sets of agents in distributed systems.
Examples of such research goals are to show the presence or absence of certain safety properties in a given system, to automate such analysis, and to guarantee the presence of desirable properties by construction.

Part of this research deals with \emph{choreographies}. Choreographies can be used as global specifications for asynchronously communicating agents, and contain certain safety properties by construction.
As a drawback, choreographic languages typically have limitations on their expressiveness, since they rely on grammatical constructs for their safety properties, which exclude some communication patterns.
We have recently shown that the \emph{shuffle on trajectories} operator can be used to specify choreographies without compromising expressiveness~\cite{DBLP:journals/ijfcs/EdixhovenJ23}. Consequently, it could serve as a basis for more expressive choreographic languages.

Other recent work on choreographies includes the use of models featuring partial orders, such as event structures~\cite{DBLP:journals/jlap/CastellaniDG23} and pomsets~\cite{DBLP:journals/jlp/TuostoG18,DBLP:journals/corr/abs-2208-04632}, to represent and analyse the behaviour of choreographies.
By using a partial order to explicitly capture causal dependencies between pairs of actions, these models avoid the exponential blowup from, e.g., parallel composition of finite state machines.

We aim to join these two lines of research by extending the shuffle on trajectories operator from words, i.e., totally ordered traces, and languages to partially ordered traces and sets thereof.
In this paper, as a first step, we explore the application of the shuffle on trajectories operator to individual partially ordered sets, or posets. The main challenge is that the resulting behaviour cannot always be represented as one poset and may require a set of them. In particular, we give a characterisation of shuffles of posets which again yield an individual poset.

\paragraph{Outline}
We recall the concept and definition of the shuffle on trajectories operator in \cref{sec:shuffle}.
We briefly discuss posets in \cref{sec:posets}.
In \cref{sec:shuffle-posets} we discuss how to apply the shuffle on trajectories operator to posets, and specifically which shuffles of posets will yield an individual poset as a result.
Finally, we briefly discuss future work in \cref{sec:future}.

The proofs of \cref{prop:extract} and \cref{lem:relation} can be found in
\ifdefined\techreport
the appendix.
\else
a separate technical report on arXiv (\url{https://arxiv.org/}).
\fi

\section{Shuffle on trajectories}
\label{sec:shuffle}

We recall the basic definitions from~\cite{DBLP:journals/ijfcs/EdixhovenJ23}.
The shuffle on trajectories operator is a powerful variation of the traditional shuffle operator\footnote{In concurrency theory, the shuffle operator is also known as free interleaving, non-communication merge, or parallel composition.}, which adds a control trajectory (or a set thereof) to restrict the permitted orders of interleaving. This allows for fine-grained control over orderings when shuffling words or languages.
The binary operator was defined --- and its properties thoroughly studied --- by Mateescu et al.~\cite{DBLP:journals/tcs/MateescuRS98}; a multiary variant was introduced slightly later~\cite{DBLP:journals/jalc/MateescuSY00}.

When defined on words, the shuffle on trajectories takes $n$ words and a \emph{trajectory}, which is a word over the alphabet $\set{1, \ldots, n}$. This trajectory specifies the exact order of interleaving of the shuffled words: in \cref{fig:bananapear}, the trajectory 1221112112 specifies that the result should first take a symbol from the first word, then from the second, then again from the second and so on.

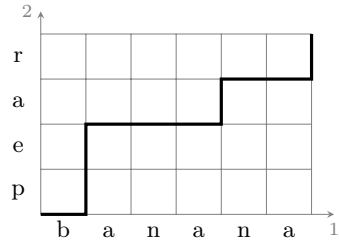
\begin{figure}[t]
	\centering
	\begin{tikzpicture}[scale=.6, >=stealth]
		\draw[help lines] (0,0) grid (6,4);
		\draw[help lines, ->] (6,0) -- (6.5,0) node[below] {\scriptsize 1};
		\draw[help lines, ->] (0,4) -- (0,4.5) node[left] {\scriptsize 2};

		\draw[very thick] (0,0) -- (1,0) -- (1,2) -- (4,2) -- (4,3) -- (6,3) -- (6,4);

		\node[anchor=base] at (0.5,-.5) {b};
		\node[anchor=base] at (1.5,-.5) {a};
		\node[anchor=base] at (2.5,-.5) {n};
		\node[anchor=base] at (3.5,-.5) {a};
		\node[anchor=base] at (4.5,-.5) {n};
		\node[anchor=base] at (5.5,-.5) {a};
		\node at (-.5,0.5) {p};
		\node at (-.5,1.5) {e};
		\node at (-.5,2.5) {a};
		\node at (-.5,3.5) {r};

	\end{tikzpicture}

	\caption{The shuffle of `banana' and `pear' over a trajectory $1221112112$: `bpeanaanar'.}
	\label{fig:bananapear}
\end{figure}

Formally, let $w_1, \ldots, w_n$ be finite words over some alphabet and let $t$ be a finite word over the alphabet $\set{1, \ldots, n}$. Let $\emptyword$ be the empty word. Then:
\[
	\shuffx{n}{t}{w_1, \ldots, w_n} =
	\begin{cases}
		a \shuffx{n}{t'}{w_1, \ldots, w_i', \ldots, w_n} & \text{if $t = it'$ and $w_i = a w_i'$}
		\\
		\emptyword & \text{if $t = w_1 = \ldots = w_n = \emptyword$}
	\end{cases}
\]
We note that $\shuffx{n}{t}{w_1, \ldots, w_n}$ is only defined if the number of occurrences of $i$ in $t$ precisely matches the length of $w_i$ for every $i$. We then say that $t$ \emph{fits} $w_i$.

\begin{example}
	\strut
	\begin{itemize}
		\item $\shuffx{3}{121332}{ab, cd, \mathit{ef}} = \mathit{acbefd}$, since 121332 fits every word.
		\item $\shuffx{2}{121}{ab, cd}$ is undefined, since 121 does not fit $cd$.
	\end{itemize}
\end{example}

The shuffle on trajectories operator naturally generalises to languages: the shuffle of a number of languages on a set (i.e., a language) of trajectories is defined as the set of all valid shuffles of words in the languages for which the trajectory fits all the words.
Formally:
\[
	\shuffx{n}{T}{L_1, \ldots, L_n} = \setd{\shuffx{n}{t}{w_1, \ldots, w_n}}{t \in T, w_1 \in L_1, \ldots, w_n \in L_n}
\]

As the operator's arity is clear from its operands, we typically omit it.

\section{Posets}
\label{sec:posets}

Partially ordered sets, or posets for short, consist of a set of nodes $E$ (events), and a partial order\footnote{Recall that a partial order is reflexive, transitive and antisymmetric.} $\causes$ defining dependencies between pairs of events --- i.e., an event can only fire if all events preceding it in the partial order have already fired.
We write $a < b$ to denote that $a \leq b$ and $a \neq b$. We write $a \geq b$ resp. $a > b$ to denote that $b \leq a$ resp. $b < a$. We write $a \concurrent b$ to denote that $a \not \leq b$ and $b \not \leq a$; we then say that $a$ and $b$ are \emph{concurrent}.
We occasionally write $E_P, \causes_P$, $<_P$, $\geq_P$, $>_P$ and $\concurrent_P$ to specify that the set of events or relation belongs to poset $P$, but where this is clear from context we typically omit the subscript.

The behaviour (or language) of a poset $P$, written $\lang{P}$, is the set of all maximal traces, i.e., maximal sequences of its events, that abide by $\causes$.
In this sense, posets can be considered a generalisation of words with concurrency: they feature a fixed set of symbols (events)\footnote{There is a difference between words and posets in the sense that the events in a poset must be unique, whereas a word may contain duplicate symbols. It would be more accurate to say that words generalise to \emph{labelled} posets, or lposets, and from there to partially ordered \emph{multisets}, or pomsets. However, in this paper we focus on posets, where all symbols are thus unique.}, but they can allow multiple orderings of them instead of only a single one. Concurrent events can happen in any order. Consequently, all traces obtained from a trace in $\lang{P}$ by swapping adjacent concurrent events must also be in $\lang{P}$. In fact, any trace in $\lang{P}$ can be obtained from any other trace in $\lang{P}$ in this fashion.

\begin{example}
\label{ex:poset}
	For poset $P_{ex}$ in \cref{fig:posets}, $E = \set{a, b, c, d}$ and the partial order consists of $a \causes a$, $a \causes c$, $a \causes d$, $b \causes b$, $b \causes d$, $c \causes c$ and $d \causes d$.
	Its language $\lang{P_{ex}}$ consists of the traces $abcd$, $abdc$, $acbd$, $bacd$, and $badc$.
\end{example}

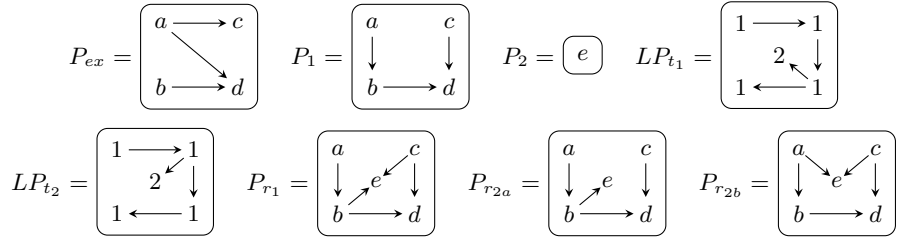
\begin{figure}[t]
	\centering

	\def\h{1.2}
	\def\v{.5}

	$P_{ex} = $
	\begin{tikzpicture}[->, >=stealth, scale = 0.85, baseline = (base.south)]
		\node (base) at (0, 0) {};

		\node[event] (a) at (0, \v) {$a$};
		\node[event] (b) at (0, -\v) {$b$};
		\node[event] (c) at (\h, \v) {$c$};
		\node[event] (d) at (\h, -\v) {$d$};

		\draw[shorten < = 2pt] (a) -- (d);
		\draw (a) -- (c);
		\draw (b) -- (d);

		\pomsetbox{(a) (d)}
	\end{tikzpicture}
	\quad
	$P_1 = $
	\begin{tikzpicture}[->, >=stealth, scale = 0.85, baseline = (base.south)]
		\node (base) at (0, 0) {};

		\node[event] (a) at (0, \v) {$a$};
		\node[event] (b) at (0, -\v) {$b$};
		\node[event] (c) at (\h, \v) {$c$};
		\node[event] (d) at (\h, -\v) {$d$};

		\draw[shorten < = 3pt, shorten > = 3pt] (a) -- (b);
		\draw (b) -- (d);
		\draw[shorten < = 3pt, shorten > = 3pt] (c) -- (d);

		\pomsetbox{(a) (d)}
	\end{tikzpicture}
	\quad
	$P_2 = $
	\begin{tikzpicture}[->, >=stealth, scale = 0.85, baseline = (base.south)]
		\node (base) at (0, 0) {};

		\node[event] (e) at (0, 0) {$e$};

		\pomsetbox{(e)}
	\end{tikzpicture}
	\quad
	$LP_{t_1} = $
	\begin{tikzpicture}[->, >=stealth, scale = 0.85, baseline = (base.south)]
		\node (base) at (0, 0) {};

		\node[event] (a) at (0, \v) {$1$};
		\node[event] (b) at (0, -\v) {$1$};
		\node[event] (c) at (\h, \v) {$1$};
		\node[event] (d) at (\h, -\v) {$1$};
		\node[event] (e) at (.5*\h, 0) {$2$};

		\draw (a) -- (c);
		\draw[shorten < = 3pt, shorten > = 3pt] (c) -- (d);
		\draw (d) -- (b);
		\draw (d) -- (e);

		\pomsetbox{(a) (d)}
	\end{tikzpicture}

	\medskip

	$LP_{t_2} = $
	\begin{tikzpicture}[->, >=stealth, scale = 0.85, baseline = (base.south)]
		\node (base) at (0, 0) {};

		\node[event] (a) at (0, \v) {$1$};
		\node[event] (b) at (0, -\v) {$1$};
		\node[event] (c) at (\h, \v) {$1$};
		\node[event] (d) at (\h, -\v) {$1$};
		\node[event] (e) at (.5*\h, 0) {$2$};

		\draw (a) -- (c);
		\draw[shorten < = 3pt, shorten > = 3pt] (c) -- (d);
		\draw (d) -- (b);
		\draw (c) -- (e);

		\pomsetbox{(a) (d)}
	\end{tikzpicture}
	\quad
	$P_{r_1} = $
	\begin{tikzpicture}[->, >=stealth, scale = 0.85, baseline = (base.south)]
		\node (base) at (0, 0) {};

		\node[event] (a) at (0, \v) {$a$};
		\node[event] (b) at (0, -\v) {$b$};
		\node[event] (c) at (\h, \v) {$c$};
		\node[event] (d) at (\h, -\v) {$d$};
		\node[event] (e) at (0.5*\h, 0) {$e$};

		\draw[shorten < = 3pt, shorten > = 3pt] (a) -- (b);
		\draw (b) -- (d);
		\draw[shorten < = 3pt, shorten > = 3pt] (c) -- (d);
		\draw (b) -- (e);
		\draw (c) -- (e);

		\pomsetbox{(a) (d)}
	\end{tikzpicture}
	\quad
	$P_{r_{2a}} = $
	\begin{tikzpicture}[->, >=stealth, scale = 0.85, baseline = (base.south)]
		\node (base) at (0, 0) {};

		\node[event] (a) at (0, \v) {$a$};
		\node[event] (b) at (0, -\v) {$b$};
		\node[event] (c) at (\h, \v) {$c$};
		\node[event] (d) at (\h, -\v) {$d$};
		\node[event] (e) at (0.5*\h, 0) {$e$};

		\draw[shorten < = 3pt, shorten > = 3pt] (a) -- (b);
		\draw (b) -- (d);
		\draw[shorten < = 3pt, shorten > = 3pt] (c) -- (d);
		\draw (b) -- (e);

		\pomsetbox{(a) (d)}
	\end{tikzpicture}
	\quad
	$P_{r_{2b}} = $
	\begin{tikzpicture}[->, >=stealth, scale = 0.85, baseline = (base.south)]
		\node (base) at (0, 0) {};

		\node[event] (a) at (0, \v) {$a$};
		\node[event] (b) at (0, -\v) {$b$};
		\node[event] (c) at (\h, \v) {$c$};
		\node[event] (d) at (\h, -\v) {$d$};
		\node[event] (e) at (0.5*\h, 0) {$e$};

		\draw[shorten < = 3pt, shorten > = 3pt] (a) -- (b);
		\draw (b) -- (d);
		\draw[shorten < = 3pt, shorten > = 3pt] (c) -- (d);
		\draw (a) -- (e);
		\draw (c) -- (e);

		\pomsetbox{(a) (d)}
	\end{tikzpicture}

	\caption{Graphical representation of a number of posets and lposets, where an arrow from $a$ to $b$ should be read as $a \causes b$. The partial order is the reflexive and transitive closure of the dependencies depicted by the arrows. For the lposets, the labels are shown rather than their events.}
	\label{fig:posets}
\end{figure}

We note that the dependencies in a poset can also be observed in its set of traces. For example, if $a < b$ then $a$ will precede $b$ in every trace, and if $a \concurrent b$ then there will both be traces where $a$ precedes $b$ and traces where $b$ precedes $a$.
Formally, we can extract the following relation $\causes_L$ from a set of traces $L \subseteq E^*$:
\begin{gather*}
	\myrule{
		\begin{gathered}
			\exists x, y, z \in E^*: xaybz \in L
			\\
			\forall \hat{x}, \hat{y}, \hat{z} \in E^*: \hat{x}b\hat{y}a\hat{z} \notin L
		\end{gathered}
	}{
		a \causes_L b
	}
	\qquad
	\myrule{
	}{
		a \causes_L a
	}
	\qquad
	\myrule{
		a \causes_L b \causes_L c
	}{
		a \causes_L c
	}
\end{gather*}

We then propose the following:

\begin{restatable}{proposition}{propextract}
\label{prop:extract}
	Let $P = \tuple{E_P, \causes_P}$ be a poset.
	Then ${\causes_{\lang{P}}} = {\causes_P}$.
\end{restatable}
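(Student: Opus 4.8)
The plan is to show the two relations coincide by proving containment in both directions, using the characterization of $\lang{P}$ as the set of maximal linearizations of $\causes_P$, together with the fact (stated in the text) that any trace in $\lang{P}$ is reachable from any other by repeatedly swapping adjacent concurrent events.

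\medskip
\noindent\textbf{Direction ${\causes_P} \subseteq {\causes_{\lang{P}}}$.} Since $\causes_{\lang{P}}$ is defined to be reflexive and transitively closed, it suffices to handle the case $a <_P b$ and show $a \causes_{\lang{P}} b$. First I would argue that \emph{some} trace of the form $xaybz$ lies in $\lang{P}$: pick any maximal linearization of $\causes_P$, which must respect $a <_P b$ and hence places $a$ before $b$. Then I would argue that \emph{no} trace $\hat xb\hat ya\hat z$ lies in $\lang{P}$: any such trace would be a linearization of $\causes_P$ placing $b$ before $a$, contradicting $a <_P b$ (antisymmetry rules out also having $b \leq_P a$). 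These two facts are exactly the premises of the first inference rule, so $a \causes_{\lang{P}} b$. The reflexive case $a \causes_P a$ matches the second rule directly.

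\medskip
\noindent\textbf{Direction ${\causes_{\lang{P}}} \subseteq {\causes_P}$.} Here I would induct on the derivation of $a \causes_{\lang{P}} b$. The reflexive and transitive rules are immediate since $\causes_P$ is itself reflexive and transitive. The crux is the first rule: assume there exist $x,y,z$ with $xaybz \in \lang{P}$ and no trace of the form $\hat xb\hat ya\hat z$ is in $\lang{P}$; I must conclude $a \causes_P b$. Suppose for contradiction that $a \not\causes_P b$. If also $b \not\causes_P a$, then $a \concurrent_P b$, and I would use the swapping property to transform the trace $xaybz$ — moving $a$ rightward past the concurrent events between it and $b$ (each such event is concurrent with $a$ since $a$ precedes it in a valid trace yet they can be reordered... this needs care), eventually producing a trace with $b$ before $a$, contradicting the second premise. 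If instead $b <_P a$, then every trace of $\lang{P}$ places $b$ before $a$, in particular $xaybz$ does — contradicting that $a$ appears before $b$ in it. Either way we reach a contradiction, so $a \causes_P b$.

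\medskip
\noindent\textbf{Main obstacle.} The delicate step is the concurrent case of the second direction: justifying that when $a \concurrent_P b$, one can actually commute $a$ to the right of $b$ using only adjacent-concurrent-event swaps. This requires showing that in the trace $xaybz$, every event appearing in the segment $y$ that separates $a$ from $b$ can be successively swapped — i.e. that $a$ is not forced (transitively, through events of $y$) to precede $b$. But if $a$ were forced to precede $b$ via a chain through $y$, transitivity of $\causes_P$ would give $a <_P b$, contradicting concurrency; so no such chain exists, and a careful induction on the length of $y$ (always picking an event in $y$ adjacent to $a$ and concurrent with $a$, which must exist) lets $a$ bubble past $b$. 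Making this bubbling argument precise — in particular identifying at each step an adjacent event that is genuinely concurrent with $a$ — is where the real work lies; the rest is bookkeeping with the definitions.
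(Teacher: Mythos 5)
Your proposal is correct and follows essentially the same route as the paper's proof: both directions are handled by the same case analysis ($a = b$, $a <_P b$, $b <_P a$, $a \concurrent_P b$), with the concurrent case refuted by exhibiting a trace of $\lang{P}$ in which $b$ precedes $a$, and the transitive rule dispatched by a routine induction. The ``bubbling'' step you single out as the main obstacle is precisely what the paper takes as given from its discussion of posets (concurrent events occur in both orders among the maximal traces); it also follows directly from the existence of a linear extension of $\causes_P$ placing $b$ before $a$, so no delicate adjacent-swap induction is actually required.
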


To model trajectories, which require duplicate symbols, we must also introduce \emph{labelled} posets, or \emph{lposets}. In these, every event is assigned a label, which is not necessarily unique. Its traces then use these labels instead of the events.

\section{Shuffling posets}
\label{sec:shuffle-posets}

As a first step towards shuffling posets, we first reinterpret shuffles on words as posets. In other words: we consider the case where all posets, including the trajectory, are totally ordered and thus consist of a single trace.
This is shown in \cref{fig:bananapear-poset}, which features the shuffle from \cref{fig:bananapear} interpreted as a poset. The traces `banana' and `pear' are present as totally ordered parts of the poset, and the trajectory adds additional dependencies between the two, as shown by the vertical (and diagonal) arrows.

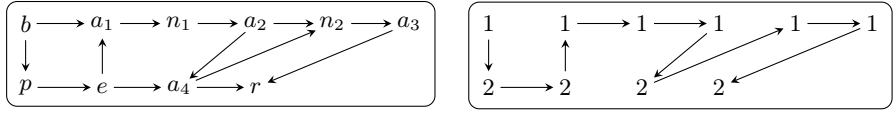
\begin{figure}[t]
	\centering

	\def\h{1.2}
	\def\v{.5}

	\begin{tikzpicture}[->, >=stealth, scale = 0.85, baseline = (base.south)]
		\node (base) at (0, 0) {};

		\node[event] (b) at (0, \v) {$b$};
		\node[event] (a1) at (\h, \v) {$a_1$};
		\node[event] (n1) at (2*\h, \v) {$n_1$};
		\node[event] (a2) at (3*\h, \v) {$a_2$};
		\node[event] (n2) at (4*\h, \v) {$n_2$};
		\node[event] (a3) at (5*\h, \v) {$a_3$};
		\node[event] (p) at (0, -\v) {$p$};
		\node[event] (e) at (\h, -\v) {$e$};
		\node[event] (a4) at (2*\h, -\v) {$a_4$};
		\node[event] (r) at (3*\h, -\v) {$r$};

		\draw (b) -- (a1);
		\draw (a1) -- (n1);
		\draw (n1) -- (a2);
		\draw (a2) -- (n2);
		\draw (n2) -- (a3);
		\draw (p) -- (e);
		\draw (e) -- (a4);
		\draw (a4) -- (r);

		\draw[shorten < = 3pt, shorten > = 3pt] (b) -- (p);
		\draw[shorten < = 3pt, shorten > = 3pt] (e) -- (a1);
		\draw[shorten < = 1pt, shorten > = 1pt] (a2) -- (a4);
		\draw (a4) -- (n2);
		\draw (a3) -- (r);


		\pomsetbox{(b) (r) (a3)}
	\end{tikzpicture}
	\quad
	\begin{tikzpicture}[->, >=stealth, scale = 0.85, baseline = (base.south)]
		\node (base) at (0, 0) {};

		\node[event] (b) at (0, \v) {1};
		\node[event] (a1) at (\h, \v) {1};
		\node[event] (n1) at (2*\h, \v) {1};
		\node[event] (a2) at (3*\h, \v) {1};
		\node[event] (n2) at (4*\h, \v) {1};
		\node[event] (a3) at (5*\h, \v) {1};
		\node[event] (p) at (0, -\v) {2};
		\node[event] (e) at (\h, -\v) {2};
		\node[event] (a4) at (2*\h, -\v) {2};
		\node[event] (r) at (3*\h, -\v) {2};


		\draw (a1) -- (n1);
		\draw (n1) -- (a2);
		\draw (n2) -- (a3);
		\draw (p) -- (e);

		\draw[shorten < = 3pt, shorten > = 3pt] (b) -- (p);
		\draw[shorten < = 3pt, shorten > = 3pt] (e) -- (a1);
		\draw[shorten < = 1pt, shorten > = 1pt] (a2) -- (a4);
		\draw (a4) -- (n2);
		\draw (a3) -- (r);


		\pomsetbox{(b) (r) (a3)}
	\end{tikzpicture}

	\caption{The figure on the left shows the shuffle from \cref{fig:bananapear} interpreted as a shuffle of posets. Indices have been added to duplicate symbols to make them unique. Some of the arrows are redundant but are kept to illustrate the general idea. The figure on the right shows the trajectory, 1221112112, as an lposet.}
	\label{fig:bananapear-poset}
\end{figure}

Generalising this to arbitrary posets and lposets is not trivial, but we have some knowledge to assist us. Crucially, since we can determine the language of a poset, it must be so that the result of shuffling posets yields the same language as the shuffle of the languages of these posets, which is defined in \cref{sec:shuffle}:
\[
	\lang{\shuff{P_t}{P_1, \ldots, P_n}} = \shuff{\lang{P_t}}{\lang{P_1}, \ldots, \lang{P_n}}
\]
If the result is an individual poset, by \cref{prop:extract} it must then be:
\[
	\shuff{P_t}{P_1, \ldots, P_n} = \tuple{E_{P_1} \union \ldots \union E_{P_n}, \causes_{\shuff{\lang{P_t}}{\lang{P_1}, \ldots, \lang{P_n}}}}
\]

For example, consider $\shuff{LP_{t_1}}{P_1, P_2}$, with $LP_{t_1}$, $P_1$ and $P_2$ as in \cref{fig:posets}. $LP_{t_1}$ has traces 1121 and 1112, $P_1$ has traces $abcd$, $acbd$ and $cabd$, and $P_2$ has a single trace $e$. Shuffling these languages yields $L_1 = \set{abced, acbed, cabed, abcde,\allowbreak acbde,\allowbreak cabde}$. From this we extract $\causes_{L_1}$, which contains all the dependencies present in $P_1$ and $P_2$ and, additionally, $a \causes_{L_1} e$, $b \causes_{L_1} e$ and $c \causes_{L_1} e$. This corresponds to poset $P_{r_1}$ in \cref{fig:posets}, which indeed yields the language $L_1$.

However, now consider $\shuff{LP_{t_2}}{P_1, P_2}$, again as in \cref{fig:posets}. $LP_{t_2}$ has traces 1211, 1121 and 1112, which yields $L_2 = L_1 \union \set{abecd, acebd, caebd}$. From this we extract $\causes_{L_2}$, which still contains all the dependencies in $P_1$ and $P_2$, but otherwise only $a \causes_{L_2} e$: the traces $abecd$ and $acebd$ imply that $b$ and $c$ are concurrent with $e$. However, then the trace $aebcd$ should also be in $L_2$, which it is not. We can then conclude from \cref{prop:extract} that there exists no poset $P$ such that $\lang{P} = L_2$. In fact, $L_2$ corresponds to a set of two posets, namely $P_{r_{2a}}$ and $P_{r_{2b}}$ in \cref{fig:posets}.

We proceed by giving a characterisation of shuffles of posets for which the result corresponds to an individual poset. A key insight is that, if the result must correspond to an individual poset, then any two events which are concurrent in one of the operands of the shuffle must, in the resulting poset, have the same relation ($<$, $>$ or $\concurrent$) to any third event originating from another operand:

\begin{restatable}{lemma}{lemrelation}
\label{lem:relation}
	Let $LP_t$ be an lposet and $P_1, \ldots, P_n, P$ posets such that $\lang{\shuff{LP_t}{P_1,\allowbreak \ldots, P_n}} = \lang{P}$ and $\lang{P} \neq \emptylang$.
	If $a, b \in E_{P_i}$ such that $a \concurrent_{P_i} b$ and $c \in E_{P_j}$ with $i \neq j$, then either $a, b <_P c$, or $a, b >_P c$, or $a, b \concurrent_P c$.
\end{restatable}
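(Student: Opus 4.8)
The plan is to argue by contradiction, using simultaneously that $\lang{P} = \shuff{LP_t}{P_1,\dots,P_n}$ --- so that every interleaving permitted by the trajectory and by the operand orders is a trace of $P$ --- and that $\lang{P}$ is the language of a poset, so that by \cref{prop:extract} it is exactly the set of linear extensions of $\causes_P$. First I would record a preliminary fact: for each $k$, the restriction of $\causes_P$ to $E_{P_k}$ equals $\causes_{P_k}$. The inclusion $\causes_{P_k} \subseteq {\causes_P}|_{E_{P_k}}$ holds because the projection of any trace of the shuffle onto $E_{P_k}$ is a trace of $P_k$; conversely, since $\lang{P} \neq \emptylang$ there is a trajectory $t \in \lang{LP_t}$, which (having the right symbol counts) fits every trace of every operand, so every trace of $P_k$ arises as the $E_{P_k}$-projection of some trace of $\lang{P}$, and ${\causes_P}|_{E_{P_k}} \subseteq \causes_{P_k}$ then follows from \cref{prop:extract}. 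In particular $a \concurrent_{P_i} b$ implies $a \concurrent_P b$, and for $y \in E_{P_i}$ the positions $y$ can occupy among the $E_{P_i}$-events of a trace of $\lang{P}$ form the integer interval from $|\set{z \in E_{P_i} : z \causes_P y}|$ to $|E_{P_i}| - |\set{z \in E_{P_i} : z >_P y}|$.

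Swapping $a$ and $b$, and passing to the order-dual of all posets and of $LP_t$ (which preserves every hypothesis), are symmetries of the statement, so it suffices to refute two configurations: \textbf{(I)} $a <_P c$ and $b >_P c$; and \textbf{(II)} $a <_P c$ and $b \concurrent_P c$. Configuration (I) is immediate: transitivity gives $a <_P b$, so $a$ precedes $b$ in every trace of $\lang{P}$; but interleaving, along some $t \in \lang{LP_t}$, a trace of $P_i$ in which $b$ precedes $a$ (one exists since $a \concurrent_{P_i} b$) with arbitrary traces of the other operands gives a trace of $\lang{P}$ in which $b$ precedes $a$ --- a contradiction.

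The real content is configuration (II). From $b \concurrent_P c$ there is a trace of $\lang{P}$ in which $c$ precedes $b$; as $a <_P c$, that trace has $a$ before $c$, hence $a$ before $b$. The idea is to manufacture a trace of $\lang{P}$ in which $c$ precedes some $x \in E_{P_i}$ with $x <_P c$, contradicting \cref{prop:extract}. Since $E_{P_i}$-events lying (in $\causes_P$) strictly between the chosen event and $c$ would get in the way, I first pick $x$ carefully: starting from $a$ (which satisfies $a <_P c$ and $a \concurrent_P b$), as long as some $y \in E_{P_i}$ lies strictly between the current candidate and $c$, observe that $y \not<_P b$ (else the current candidate would be $<_P b$, contradicting $\concurrent_P b$) and $y \not>_P b$ (else $b <_P c$, contradicting (II)), hence $y \concurrent_P b$; replace the candidate by $y$. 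By finiteness this stabilises at some $x \in E_{P_i}$ with $x <_P c$, $x \concurrent_P b$, and \emph{no} $E_{P_i}$-event strictly between $x$ and $c$.

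Now let $r := 1 + |\set{y \in E_{P_i} : y <_P b \text{ or } y <_P c}|$. One checks that $r$ is the least position $b$ can occupy among the $E_{P_i}$-events in a linear extension $\mu$ of $\causes_P$ in which $c$ precedes $b$, and that some such $\mu$ realises it. The crux --- and where I expect the genuine work to lie --- is the inequality $r \leq R_x$, where $R_x := |E_{P_i}| - |\set{y \in E_{P_i} : y >_P x}|$ is the greatest position $x$ can occupy among the $E_{P_i}$-events: this holds because $\set{y \in E_{P_i} : y <_P b \text{ or } y <_P c}$ and $\set{y \in E_{P_i} : y >_P x}$ are disjoint (a common $y$ would force $x <_P b$, contradicting $x \concurrent_P b$, or would be an $E_{P_i}$-event strictly between $x$ and $c$, of which there are none) and neither contains $b$ (using $x \concurrent_P b$ and $b \concurrent_P c$), so jointly they occupy at most $|E_{P_i}| - 1$ of the $|E_{P_i}|$ positions. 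Given $r \leq R_x$, I would take the trajectory induced by $\mu$, keep $\mu$'s operand traces for all operands except $i$, and for operand $i$ use a trace of $P_i$ that places $x$ at an $E_{P_i}$-position $\geq r$; re-interleaving these along that trajectory yields, by the shuffle characterisation, a trace of $\lang{P}$ in which $c$ precedes $x$, contradicting $x <_P c$. This settles (II), and hence the lemma.
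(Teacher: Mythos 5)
Your proof is correct, but it takes a genuinely different route from the paper's for the main case. The paper's proof of the hard configuration ($a <_P c$, $b \concurrent_P c$) stays inside a single trace $uavcwbx$ of $\lang{P}$ and performs a carefully ordered sequence of swaps of adjacent concurrent events --- first clearing the $E_{P_i}$-events out of the segment between $a$ and $b$, then swapping $a$ past $c$ or $a$ with $b$ --- to contradict $a <_P c$. You instead exploit the shuffle structure directly: fix the trajectory of a witnessing trace, replace only the operand-$i$ component by another trace of $P_i$, and use a counting argument on positions within the $E_{P_i}$-projection ($r \leq R_x$ via the disjointness of $\set{y : y <_P b \text{ or } y <_P c}$ and $\set{y : y >_P x}$, neither containing $b$) to force $c$ before some $x <_P c$. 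The auxiliary element $x$ --- a $<_P$-maximal element of $E_{P_i}$ below $c$ that is still concurrent with $b$ --- has no counterpart in the paper, and your preliminary fact that ${\causes_P}|_{E_{P_k}} = \causes_{P_k}$ is the paper's observation (2) made precise. Your version is arguably easier to verify rigorously than the paper's swap-chasing, at the cost of more bookkeeping; the one step you leave as ``one checks'' (that $r$ is realised by a trace with $c$ before $b$) does go through, e.g.\ by listing the down-set $({\downarrow}b \cup {\downarrow}c) \setminus \set{b}$ first, then $b$, then the rest, and is worth writing out. The symmetry reduction to configurations (I) and (II) via $a \leftrightarrow b$ and order-dualisation is valid, since reversing all posets and the trajectory lposet reverses every word in the shuffle and so preserves the hypothesis of \cref{lem:relation}.
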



We can then group the events in every $P_i$ according to the reflexive and transitive closure of the concurrency relation $\concurrent_{P_i}$; two events which are related in this closure then belong to the same group. Note that, while the events in a group are partially ordered, the groups of every $P_i$ are, by construction, totally ordered. It follows from \cref{lem:relation} that two events in the same group, even when not concurrent, must have the same relation to any event outside of their group in $P$. This in turn implies a similar condition on the trajectory lposet: any two $i$-labelled events in $LP_t$ that can match two events from the same group of $P_i$ must have the same relation to any $j$-labelled event in $LP_t$ (where $j$ is not necessarily unequal to $i$) that can match an event outside of their group.

\cref{fig:shuffle-poset} shows $P_{r_1}$, corresponding to $\shuff{LP_{t_1}}{P_1, P_2}$, and $LP_{t_1}$ (from \cref{fig:posets}), both restructured to show the groups of $P_1$ and $P_2$. This demonstrates an interesting parallel with \cref{fig:bananapear-poset}: both feature horizontal traces with additional arrows specifying dependencies between components of these traces. However, in \cref{fig:bananapear-poset} the components consist of individual events, whereas in \cref{fig:shuffle-poset} the components consist of posets.
In this sense, shuffles resulting in individual posets generalise shuffles on traces.

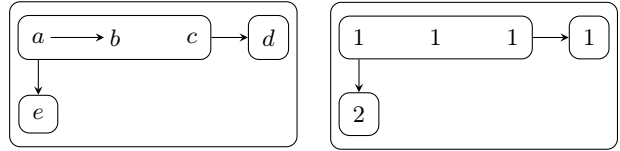
\begin{figure}[t]
	\centering

	\def\h{1.2}
	\def\v{.6}

	\begin{tikzpicture}[->, >=stealth, scale = 0.85, baseline = (base.south)]
		\node (base) at (0, 0) {};

		\node[event] (a) at (0, \v) {$a$};
		\node[event] (b) at (\h, \v) {$b$};
		\node[event] (c) at (2*\h, \v) {$c$};
		\node[event] (d) at (3*\h, \v) {$d$};
		\node[event] (e) at (0, -\v) {$e$};

		\draw (a) -- (b);

		\pomsetbox[abc]{(a) (b) (c)}
		\pomsetbox[dd]{(d)}
		\pomsetbox[ee]{(e)}

		\draw (abc) -- (dd);
		\draw (abc.south -| a) -- (ee);

		\pomsetbox{(abc) (ee) (dd)}
	\end{tikzpicture}
	\quad
	\begin{tikzpicture}[->, >=stealth, scale = 0.85, baseline = (base.south)]
		\node (base) at (0, 0) {};

		\node[event] (a) at (0, \v) {1};
		\node[event] (b) at (\h, \v) {1};
		\node[event] (c) at (2*\h, \v) {1};
		\node[event] (d) at (3*\h, \v) {1};
		\node[event] (e) at (0, -\v) {2};

		\pomsetbox[abc]{(a) (b) (c)}
		\pomsetbox[dd]{(d)}
		\pomsetbox[ee]{(e)}

		\draw (abc) -- (dd);
		\draw (abc.south -| a) -- (ee);

		\pomsetbox{(abc) (ee) (dd)}
	\end{tikzpicture}

	\caption{The figure on the left shows $P_{r_1}$, corresponding to $\shuff{LP_{t_1}}{P_1, P_2}$ (see \cref{fig:posets}), restructured to show the groups of $P_1$ and $P_2$. An arrow from one group of events to another should be read as an arrow from all events in the originating group to all events in the target group. The figure on the right shows the restructured $LP_{t_1}$; the dependencies within groups in $LP_{t_1}$ are irrelevant for the resulting traces.}
	\label{fig:shuffle-poset}
\end{figure}

Concluding, we can then characterise shuffles on posets which result in individual posets as those where the trajectory lposet is structured along the operand posets' groups, as in \cref{fig:shuffle-poset}, possibly with dependencies between different operands' groups.

\section{Future work}
\label{sec:future}

Now that we have studied shuffles of posets resulting in individual posets, there are two evident avenues for future work: (1) shuffles of lposets, where one label may occur multiple times rather than just considering orderings of unique events and (2) shuffles of posets resulting in sets of posets and shuffles of sets of posets, where the main challenge may be to minimise the resulting number of posets.

%
%
%
\bibliographystyle{splncs04}
\bibliography{ifmphd2023}

\ifdefined\techreport

\clearpage
\appendix

\section{Proofs from the paper}
\label{sec:proofs}

\propextract*
\begin{proof}
	\strut
	\begin{itemize}
		\item Suppose that $a \causes_P b$ for some $a, b \in E_P$. Then, since $\causes_P$ is a partial order, either:
		\begin{itemize}
			\item $a = b$, in which case also $a \causes_{\lang{P}} b$ since $\causes_{\lang{P}}$ is reflexive; or

			\item $a \causes_P b$ but $a \neq b$, in which case $a$ will precede $b$ in every (maximal) trace of $P$. Furthermore, $\lang{P}$ will not be empty, and then $a \causes_{\lang{P}} b$ by its first rule.
		\end{itemize}

		\item Suppose that $a \causes_{\lang{P}} b$ for some $a, b \in E_P$. Then, by definition of $\causes_{\lang{P}}$, either:
		\begin{itemize}
			\item $a = b$, in which case also $a \causes_P b$ since $\causes_P$ is reflexive; or

			\item (1) $\exists x, y, z \in E_P^*: xaybz \in \lang{P}$ and (2) $\forall \hat{x}, \hat{y}, \hat{z} \in E_P^*: \hat{x}b\hat{y}a\hat{z} \notin \lang{P}$.
			Suppose, for the sake of contradiction, that $a \not \causes_P b$. Then either $a >_P b$, which contradicts (1), or $a \concurrent_P b$, which contradicts (2) since there should then also exist some trace in $\lang{P}$ in which $b$ precedes $a$.
			We can thus conclude that $a \causes_P b$.
		\end{itemize}
		We note that the transitive rule for $\causes_{\lang{P}}$ is subsumed by the other two and does not need to be considered separately. Nevertheless, a straightforward inductive argument suffices to cover this.
	\end{itemize}
\end{proof}

\lemrelation*
\begin{proof}
	If no subscript is given, in this proof, $<$ and $\concurrent$ refer to $<_P$ and $\concurrent_P$.

	We start by making three observations:
	\begin{enumerate}[label=(\arabic*)]
		\item As noted in \cref{sec:posets}, if $a \concurrent_{P_i} b$, then there must exist traces in $\lang{P_i}$ such that $a$ occurs before $b$ in one, and after $b$ in another.
		Furthermore, since all traces of a poset can be obtained from an arbitrary one by swapping adjacent concurrent events, there must also exist traces such that $a$ and $b$ are adjacent to each other, in both orders, e.g., $vabw$ and $vbaw$ for some $v, w$.

		\item The causal relation between two events in $P_i$ remains the same in $P$.
		The shuffle on trajectories operator does not change the internal order of the traces of its operands, meaning that, for example if $d$ occurs before $e$ in every trace of $P_i$, then it will also occur before $e$ in every trace of $P$. Similarly, if $P_i$ both contains traces in which $d$ occurs before $e$ and traces in which $d$ occurs after $e$, then so will $P$.
		In particular, this means that, since $a \concurrent_{P_i} b$, also $a \concurrent b$.

		\item If two events are concurrent in $P_i$ and can thus be swapped in traces of $P_i$ when adjacent, then they can also be swapped in traces of $P$ when there are no other events from $P_i$ in between them. The shuffle on trajectories operator does not make any distinction between the corresponding traces from $P_i$.
		Additionally, we note that this implies that the two events are concurrent with all events (from other operands) in between.
	\end{enumerate}


	Consider the relation between $a$, $b$ and $c$ in $P$. As observed in (2), $a \concurrent b$.
	\begin{itemize}
		\item Suppose that $a < c$ and $c < b$. It then follows by transitivity (recall that $\leq$ is a partial order) that $a < b$, which contradicts $a \concurrent b$.
		Analogously, we can exclude the case where $b < c$ and $c < a$.

		\item Suppose that $a < c$ and $b \concurrent c$. Then, there must exist a trace $uavcwbx$ in $\lang{P}$ for some $u, v, w, x$, i.e., a trace where $a$ occurs before $c$ and where $b$ occurs after $c$.
		If $vw$ contains no events from $P_i$, then it follows from $a \concurrent b$ and (3) that $\lang{P}$ must also contain $ubvcwax$, which contradicts $a < c$.
		Otherwise, we systematically remove the events from $P_i$ in $vw$:
		\begin{itemize}
			\item Let $d$ be the leftmost event from $P_i$ in $vw$ such that $d < b$. Then, since we must be able to swap adjacent concurrent events in traces of $P_i$ until $a$ and $b$ are adjacent (as noted in (1)), $d$ must be concurrent with all events from $P_i$ to the left of it in $avw$ (including $a$). We can thus swap $d$ with its left neighbour from $P_i$ until it has been swapped with $a$, thus reducing the number of events from $P_i$ in $vw$ by 1.
		 	We repeat this until there are no events left in the remainder of $vw$ that fit this description.

			\item Analogously, let $e$ be the rightmost event from $P_i$ in the remainder of $vw$ such that $a < e$. Then $e$ must be concurrent with all events from $P_i$ to the right of it until at least $b$, so we swap it with its right neighbour from $RPi$ until it has been swapped with $b$. Again, we repeat this until there are no events left that fit this description.

			\item All events from $P_i$ remaining in $vw$ must now be concurrent with both $a$ and $b$. We can thus keep swapping $a$ with its right neighbour (or, alternatively, $b$ with its left neighbour) from $P_i$ until there are no events from $P_i$ remaining between $a$ and $b$.
		\end{itemize}

		If, at some point, $a$ has passed $c$ and is now to the right of it, then this contradicts $a < c$. If $a$ remains to the left of $c$ and $b$ to its right, then we can swap $a$ and $b$ to contradict $a < c$. Otherwise, if $b$ has passed $c$ and is now to the left of it, then $b$ must be concurrent with all events to the right of it at least until $c$ (as noted in (3)). We can then swap it with its right neighbour (from $P$) until it has been swapped with $c$, after which we can swap it with $a$ to contradict $a < c$.

		As all cases lead to a contradiction, we can thus conclude that it is not possible that $a < c$ and $b \concurrent c$. Analogously, we can also exclude the cases where $c < a$ and $b \concurrent c$, where $b < c$ and $a \concurrent c$, and where $c < b$ and $a \concurrent c$.
	\end{itemize}

	As we have excluded all other cases, this proves that either $a, b < c$, or $c < a, b$, or $a, b \concurrent c$.
	\qed
\end{proof}

\else
\fi

\end{document}